\documentclass{article}
\usepackage[utf8]{inputenc}
\usepackage[T1]{fontenc}
\usepackage[english]{babel}

\usepackage{blindtext}
\usepackage{amssymb,amsmath,amsthm}
\usepackage[linesnumbered,algoruled,boxed,lined]{algorithm2e}
\usepackage{enumerate} 
\usepackage{a4wide}
\usepackage{yfonts}
\allowdisplaybreaks
\usepackage[active]{srcltx}
\usepackage{comment}
\usepackage{epsfig}
\usepackage{xspace}
\usepackage{fancyhdr}
\usepackage{url}
\usepackage{hyperref}
\usepackage{color}
\usepackage[pass]{geometry}
\usepackage[normalem]{ulem}
\usepackage{authblk}

\usepackage[textsize=small]{todonotes}

\newtheorem{theorem}{Theorem}[section]
\newtheorem{corollary}[theorem]{Corollary}
\newtheorem{lemma}[theorem]{Lemma}
\newtheorem{proposition}[theorem]{Proposition}
\newtheorem{claim}[theorem]{Claim}

\theoremstyle{remark}
\newtheorem{remark}[theorem]{Remark}

\theoremstyle{definition}
\newtheorem{definition}[theorem]{Definition}

\providecommand{\keywords}[1]
{
  \small	
  \textbf{\textit{Keywords---}} #1
}

\newcommand{\ham}{\textsc{Hamiltonian Path}\xspace}
\newcommand{\domiset}{\textsc{Dominating Set}\xspace}
\newcommand{\delcactus}{\textsc{Deletion to cacti}\xspace}
\newcommand{\delcater}{\textsc{Deletion to caterpillar}\xspace}
\newcommand{\delcons}{\textsc{Deletion to constellations}\xspace}


\title{Edge deletion to tree-like graph classes}





\newcommand{\suggestion}[2]{\!#2\!}

\author[1]{Ivo Koch}
\author[2,3,5]{Nina Pardal}
\author[4]{Vinicius Fernandes dos Santos}
\affil[1]{University of General Sarmiento, Argentina}
\affil[2]{University of Buenos Aires, Argentina}
\affil[3]{ICC-CONICET, Argentina}
\affil[4]{Computer Science Department, Federal University of Minas Gerais, Brazil}
\affil[5]{University of Sheffield, UK}

\date{}

\begin{document}

\maketitle


\begin{abstract}
For a fixed property (graph class) $\Pi$, given a graph $G$ and an integer $k$, the \emph{$\Pi$-deletion} problem consists in deciding if we can turn $G$ into a graph with the property $\Pi$ by deleting at most $k$ edges. 
The $\Pi$-deletion problem is known to be NP-hard for most of the well-studied graph classes, such as chordal, interval, bipartite, planar, comparability and permutation graphs, among others; even deletion to \emph{cacti} is known to be NP-hard for general graphs. However, there is a notable exception: the deletion problem to \emph{trees} is polynomial.
Motivated by this fact, we study the deletion problem for some classes \suggestion{close}{similar} to trees, addressing in this way a knowledge gap in the literature. We prove that deletion to cacti is hard even when the input is a bipartite graph. On the positive side, we show that the problem becomes tractable when the input is chordal, and for the special case of quasi-threshold graphs we give a simpler and faster algorithm.
In addition, we present sufficient structural conditions on the graph class $\Pi$ that imply the NP-hardness of the $\Pi$-deletion problem, and show that deletion from general graphs to some well-known subclasses of \suggestion{trees}{forests} is NP-hard.
\end{abstract}


\keywords{edge deletion problems, modification problems, sparse graph classes}

\section{Introduction}

A graph modification problem consists in deleting or adding edges or vertices to a graph so that the resulting graph fulfills some pre-specified properties. More formally, for a fixed property $\Pi$ that usually represents membership to a graph class, given a graph $G$ and an integer $k$, the $\Pi$-deletion (resp.\ completion, editing) problem consists in deciding whether it is possible to modify the graph by deleting (resp.\ adding, or adding and deleting) at most $k$ edges to obtain a new graph that fullfils the property $\Pi$. 
There is one more version of this type of problems, which is the $\Pi$-vertex-deletion, that consists in determining if there exists an induced subgraph with the property $\Pi$, obtained by deleting at most $k$ vertices.
These are the decision versions of graph modification problems, which also admit an optimization version. When $k$ is not a part of the input, the most natural goal in graph modification problems is to find a \emph{minimum} modification, that is, to determine the smallest subset of edges or vertices that one has to delete or add to obtain a graph with the desired property.

Graphs are very useful to model quite diverse real world and theoretical structures. In particular, graph modification can be used to model and solve a wide variety of problems in many dissimilar fields, such as database and inconsistency management~\cite{TarYan1984}, computer vision~\cite{ChuMum1994}, and molecular biology~\cite{GolGolKapSha1995,GolKapSha1994}, to name a few.
Furthermore, many fundamental problems in graph theory can be reinterpreted as graph modification problems. For example, the Connectivity problem can be thought of as the problem of finding the minimum number of vertices or edges that disconnect the graph when removed from it.

Most graph modification problems turn out to be intractable.
Yannakakis~\cite{Yannakakis79} proved that finding a maximum connected \suggestion{}{induced} subgraph with a property $\Pi$ is NP-hard not only for every ``non-trivial'' hereditary property $\Pi$, but also for other properties that are not hereditary. One example of this are trees and stars: even though trees and stars are not hereditary classes, every connected induced subgraph of a tree is a tree and every connected induced subgraph of a star is a star. As a consequence of the aforementioned result by Yannakakis, removing the minimum number of vertices to obtain a tree or a star turns out to be NP-hard.

As for the $\Pi$-deletion problem, there are no such general results. 
The relationship between the vertex and the edge version of the problem is also not consistent. One example we might consider is the property $\Pi$ of being a forest. In this case, the $\Pi$-deletion problem is equivalent to that of finding a spanning tree for each connected component of the input graph, which is easy. Whereas it follows from the results by Yannakakis that minimum vertex deletion for an induced forest is NP-hard.
Furthermore, concerning the $\Pi$-deletion problem there are some NP-hardness results that hold for some quite large families of hereditary properties, such as the property of being cluster, $P_k$-free,\cite{Colburn-ElMallah88}, \suggestion{}{or $C_k$-free~\cite{yannakakis_edge_deletion} for some fixed $k\geq 3$.}
An interesting characterization of the $\Pi$-deletion problem for a large and natural family of graph properties can be found in~\cite{AlonSS05}, similar to the one given by Yannakakis. In that paper, the authors show the NP-hardness of the $\Pi$-deletion problem for every monotone property $\Pi$ \suggestion{}{(i.e. a graph property closed under removal of vertices and edges)} that holds for all bipartite graphs. This encompasses many interesting properties, for example the one of being triangle-free.

The $\Pi$-deletion problem turns out to be hard for most graph classes (e.g.\ planar, chordal, bipartite, interval and proper interval, split, threshold, etc), but is polynomially solvable for trees. This work is motivated by the exploration of the tractable-intractable threshold of the $\Pi$-deletion problem when $\Pi$ is a graph class closely related to trees, namely constellations, caterpillars, and cacti. 

We find that the decision version of all these problems is NP-complete, even \suggestion{for bipartite graphs}{when the input is a bipartite graph.} \suggestion{}{Though the general problem for cacti is already known to be hard~\cite{Colburn-ElMallah88}, this refines the intractability result for this graph class, while providing new results for the deletion problem into the other aforementioned tree-like graph classes}. We also provide sufficient structural properties for the graph class $\Pi$ to ensure that $\Pi$-deletion is NP-hard, even restricted to some subclasses of bipartite graphs. \suggestion{}{Even though not as general as those of Yannakakis~\cite{Yannakakis79}, this provides a framework for showing hardness results for some non-trivial graph classes.} Then we delve into cactus deletion and prove that this problem can be solved in polynomial time when the input is restricted to chordal graphs, and give a simple and efficient polynomial-time algorithm for quasi-threshold graphs. 

This article is organized as follows. Section~\ref{sec:defi} presents basic definitions, in Section~\ref{sec:edge-del-complexity} start by proving NP-completeness for cactus deletion when the input is restricted to bipartite graphs. Then, in Section~\ref{sec:generalization}, we propose sufficient structural properties for a graph class $\Pi$ so that the $\Pi$-deletion problem results NP-hard even within bipartite graphs. We finish the section by showing NP-completeness for constellations and caterpillars, two known subclasses of forests.  Section~\ref{sec:algorithms} focuses on positive results: we start by showing that cactus deletion is polynomial-time solvable when the input graph $G$ is chordal; a specialized, efficient procedure is presented for quasi-threshold graphs in Section~\ref{sec:qt_cactus}. 
Finally, the last Section~\ref{sec:conclusions} is devoted to conclusions and future work directions.

\section{Definitions}\label{sec:defi}

All the graphs in this paper are undirected and simple, unless explicitly stated otherwise. Let $G$ be a graph, and let $V(G)$ and $E(G)$ denote its vertex and edge sets, respectively. We denote by $n$ the number of vertices and by $m$ the number of edges. Further, we use the standard notation $N(V')$ for the neighborhood of a subset $V'$ of $V$.
Whenever it is clear from the context, we simply write $V$ and $E$ and note $G=(V,E)$. For basic definitions not included here, we refer the reader to~\cite{Bondy}. 

Given a graph $G$ and $S\subseteq V$, the \emph{subgraph of $G$ induced by $S$}, denoted by $G[S]$, is the graph with vertex set $S$ such that two vertices of $S$ are adjacent if and only if they are adjacent in $G$. When $G'$ and $G[S]$ are isomorphic for some $S \subseteq V$, by abuse of terminology we say that $G'$ is an \emph{induced subgraph} of $G$. If $V' \subseteq V$ and $E' \subseteq E$, we simply say that $H=(V', E')$ is a \emph{subgraph of $G$}, whether $E'$ consists of all the edges between vertices in $V'$ or only a subset of those.
For any family $\mathcal{F}$ of graphs, we say that $G$ is \emph{$\mathcal{F}$-free} if $G$ does not contain any graph $F\in \mathcal{F}$ as an induced subgraph. If $F\in \mathcal{F}$ is the unique element of $ \mathcal{F}$, we may write \emph{${F}$-free}, instead.
If a graph $G$ is $\mathcal{F}$-free, then the graphs in $\mathcal{F}$ are called the \emph{forbidden induced subgraphs} of $G$.
A graph property $\Pi$ is \emph{hereditary} if $\Pi$ is inherited by every induced subgraph of a graph.

A graph is \emph{complete} if all its vertices are pairwise adjacent. A vertex is \emph{universal} if it is adjacent to all the vertices in a graph.
A \emph{clique} in a graph $G$ is a complete induced subgraph of $G$. Also, we will often use this term for the vertex set that induces the clique. We denote by $K_n$ the complete graph of $n$ vertices.

Let $G_1 = (V_1, E_1)$ and $G_2 = (V_2,E_2)$ be two graphs with $V_1 \cap V_2 = \emptyset$. The \emph{union} of $G_1$ and $G_2$ is the graph $G_1 \cup G_2 = (V_1 \cup V_2, E_1 \cup E_2)$, and the \emph{join} of $G_1$ and $G_2$ is the graph $G_1 \oplus G_2 = (V_1 \cup V_2 , E_1 \cup E_2 \cup V_1 \times V_2 )$.
A graph $G$ is a \emph{cograph} if it is $P_4$-free. Any cograph can be constructed from $K_1$ by repeated application of union and join operations. A graph is \emph{quasi-threshold} if it is $\{P_4, C_4\}$-free. These graphs are also known as \emph{trivially perfect graphs}. Analogously as for cographs, quasi-threshold graphs can be defined recursively from $K_1$ by either repeatedly considering disjoint unions, or adding a universal vertex.

A graph is \emph{cluster} if it is the union of cliques. A graph is a \emph{block graph} if every maximal $2$-connected component (i.e. every \textit{block}) is a clique. 

\suggestion{}{A graph is \emph{subcubic} if every vertex has degree at most three.}

A graph $G$ is an \emph{interval} graph if it admits an intersection model consisting of intervals on the real line, that is, a family ${\cal I}$ of intervals on the real line and a one-to-one correspondence between the set of vertices of $G$ and the intervals of ${\cal I}$ such that two vertices are adjacent in $G$ if and only if the corresponding intervals intersect. 
A \emph{proper interval} graph is an interval graph that admits a \emph{proper interval model}, this is, an intersection model in which no interval is properly contained in any other. 

A graph $G$ is a \emph{bipartite} graph if its vertex set can be partitioned into two sets $V_1, V_2$ of pairwise nonadjacent vertices. A bipartite graph is \emph{complete} if every pair of vertices $v_1 \in V_1, v_2 \in V_2$ is an edge of the graph and, if $|V_1|=n$ and $|V_2|=m$, we denote it by $K_{n,m}$.

A \emph{forest} is an acyclic graph. A \emph{tree} is a connected acyclic graph. The vertices of degree 1 of a graph are called \emph{pendant vertices} and pendant vertices of a tree are called \emph{leaves}. A \emph{path} is a sequence of vertices $v_1, v_2, \ldots, v_k$ such that $v_i$ and $v_{i+1}$ are adjacent for each $1\leq i\leq k-1$ and $v_i v_j$ is not an edge for every pair of non-consecutive vertices $v_i$ and $v_j$. We denote by $P_k$ a path with $k$ vertices.

A graph $G$ is a \emph{caterpillar} if $G$ is a tree in which the removal of all the leaves results in a path (called the \emph{spine} or \emph{central path}).

A \emph{star} is a complete bipartite graph $K_{1,n}$ for some $n\geq 0$. A \emph{constellation} is a forest where each connected component is a star.

A graph $G$ is a \emph{cactus} if $G$ is connected and every edge of the graph lies in at most one cycle.
\suggestion{}{If each connected component of a graph $G$ induces a cactus, we say that $G$ is a \emph{forest of cacti}.}
Given a graph class $\Pi$, we say that a subgraph $H$ of $G$ is a \emph{maximum spanning $\Pi$} if $H$ belongs to $\Pi$ and the number of edges is maximum among all the $\Pi$-subgraphs of $G$.
A graph $G$ is \emph{chordal} if it contains no induced cycles of length greater than $3$.

A graph $G$ is \emph{planar} if it admits a planar embedding, this is, there is a way to draw it on the plane such that its edges intersect only at their endpoints. 

A \emph{hamiltonian path (resp.\ cycle)} is a path (resp.\ cycle) that visits each vertex of a graph exactly once.
A \emph{dominating set} of a graph $G=(V,E)$ is a subset $D \subseteq V$ such that each vertex in $V \setminus D$ is adjacent to at least one vertex in $D$.

Given a graph $G$, the decision problem \ham consists in deciding if there is a hamiltonian path in $G$, or not.
For its part, given a graph $G$ and a integer $k$, the decision problem \domiset consists in determining whether there is a dominating set of $G$ of size at most $k$.
Both problems are known to be NP-complete, even when the input is restricted to subclasses of bipartite or chordal graphs (see, for example, \cite{bertossi1984dominating,bertossi1986hamiltonian,corneil1984clustering,itai1982hamilton,zvervich1995induced}).

\section{Complexity results for edge-deletion to classes close to trees}\label{sec:edge-del-complexity}

This section is organized as follows. First, in Section~\ref{sec:cactus_bipartite} we study the complexity of the edge-deletion problem to cacti when the input is a bipartite graph. In Section~\ref{sec:generalization} we generalize the hardness results for $\Pi$-deletion when the input is a bipartite graph by obtaining sufficient structural conditions for the target class $\Pi$. 
\suggestion{We finish by showing in Section~\ref{sec:subclasses_trees}
that the problem is intractable for constellations and caterpillars.}{Finally, Section~\ref{sec:subclasses_trees}
is devoted to showing that the problem is intractable when $\Pi$ is either the class of constellations or caterpillars.}

\subsection{Cacti}\label{sec:cactus_bipartite}

Edge-deletion to cacti is known to be NP-hard~\cite{Colburn-ElMallah88}. Our first target is to shed some light on how should we restrict the input to obtain tractability for cactus deletion.


\begin{theorem}
{\sc Deletion to cactus} is NP-complete for bipartite graphs.
\end{theorem}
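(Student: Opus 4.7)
My strategy is to reduce from the general edge-deletion-to-cactus problem, which is known to be NP-hard~\cite{Colburn-ElMallah88}, using the classical edge-subdivision trick to bipartize the input. Membership in NP is routine: a candidate deletion set $D$ of size at most $k$ can be checked in polynomial time, since both connectivity of $G\setminus D$ and the condition that every edge lies in at most one cycle are decidable from a block decomposition.

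Given an instance $(G,k)$ of \delcactus on general graphs, I construct $G^\ast$ by subdividing every edge $e=uv$ of $G$ exactly once, introducing a fresh middle vertex $x_e$ and replacing $uv$ with the two edges $ux_e$ and $x_ev$. The graph $G^\ast$ is bipartite, since one side can be taken to be $V(G)$ and the other $\{x_e : e\in E(G)\}$, and the construction is linear in the size of $G$. The output of the reduction is the instance $(G^\ast,k)$.

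The heart of the proof is to show that $G$ and $G^\ast$ have the same cactus-deletion number. In the forward direction, given $D\subseteq E(G)$ with $H := G\setminus D$ a cactus, I build $D^\ast\subseteq E(G^\ast)$ by deleting, for each $e=uv\in D$, exactly one of $\{ux_e,x_ev\}$; the resulting $G^\ast\setminus D^\ast$ is the edge-subdivision of $H$ together with one extra pendant $x_e$ attached to an endpoint of every deleted edge, so it is again a cactus, and $|D^\ast|=|D|$. In the reverse direction, given an optimal $D^\ast$ for $G^\ast$, I classify each pair $\{ux_e,x_ev\}$ by how many of its edges lie in $D^\ast$: deleting both would leave $x_e$ isolated and break connectivity, so each pair loses at most one edge. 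Letting $H$ be the subgraph of $G$ consisting of those edges $e$ whose entire pair survives, I expect to show that $H$ is a spanning connected cactus of $G$ and that $|D^\ast|=|E(G)|-|E(H)|$, so $D := E(G)\setminus E(H)$ is a valid deletion set for $G$ of the same size.

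The main obstacle I foresee is the \emph{connectivity} part of the reverse direction: I must rule out the possibility that two original vertices $u,v\in V(G)$ are glued together in $G^\ast\setminus D^\ast$ only through subdivision-vertex pendants rather than through surviving pairs. The case analysis above handles this, because a pair that loses one of its two edges produces a pendant attached to exactly one of $u,v$ and therefore cannot transmit connectivity between original vertices; hence $H$ itself must be connected. Combined with the NP membership argument, this yields the theorem.
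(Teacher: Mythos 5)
Your proof is correct, but it follows a genuinely different route from the paper. The paper gives a self-contained reduction from {\sc Partition into $P_3$} on subcubic bipartite graphs~\cite{monnot2007path}: it attaches four new vertices $x,y,x',y'$ to the bipartition sides and then uses a counting argument (in a bipartite cactus every cycle has length at least $4$, so the edge budget forces every cycle to be exactly a $C_4$) to recover the $P_3$-partition. You instead reduce from \delcactus on general graphs, whose hardness the paper only cites~\cite{Colburn-ElMallah88}, and bipartize by subdividing every edge once; the key observations — each pair $\{ux_e,x_ev\}$ loses at most one edge because an isolated $x_e$ would violate connectedness, pendant subdivision vertices cannot transmit connectivity between original vertices, and cycles of the surviving subgraph of $G$ correspond bijectively to cycles of the subdivided graph (so the one-cycle-per-edge condition transfers, a step you leave implicit but which is immediate) — do establish that the deletion numbers coincide, so the reduction is parameter-preserving and valid. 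What each approach buys: yours is shorter, more modular, and automatically preserves planarity and maximum degree of the source instance, so it would transfer any such restriction present in the base hardness result; on the other hand it leans on the exact formulation of the cited result, in particular on the connectivity clause in the definition of cactus (if the known hardness were stated for forests of cacti, you would need the small extra observation that an optimal solution never deletes both edges of a pair, since restoring one of them creates no cycle). The paper's reduction is independent of that citation and ties the bipartite case to a different, well-understood problem, at the cost of a more intricate gadget and counting argument.
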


\begin{proof}
Membership in NP is clear.
For the NP-hardness, we reduce from {\sc Partition into $P_3$}: 

Given a graph $G=(V,E)$, the problem {\sc Partition to $P_3$ (PIP3)} consists in deciding whether there exists a partition of $V$ into vertex-disjoint $P_3$'s. {\sc PIP3} is known to be NP-hard, even for subcubic bipartite graphs~\cite{monnot2007path}.


Let $G$ be a subcubic bipartite instance of {\sc PIP3}, with bipartition $\{X,Y\}$. We may assume that $|V(G)|$ is divisible by 3, otherwise the answer is trivially negative. Let $H$ be a graph with
\begin{itemize}
\item $V(H) = V(G) \cup \{x,y,x',y'\}$,
\item $E(H) = E(G) \cup \{xv \mid v \in X\} \cup \{yv \mid v \in Y\} \cup \{xy, xx', x'y', yy'\}$. 
\end{itemize}
Notice that $H$ is bipartite. Let $k = |E(H)| - 4|V(G)|/3 - 4$. 
We will show that there is a partition into $P_3$'s for $G$ if and only if there is a spanning cactus subgraph of $H$ that is obtained by removing at most $k$ edges from $E(H)$.

First, suppose $G$ has a partition $\mathcal{P}$ into (induced) $P_3$'s, and let $E(\mathcal{P})$ be the set of edges appearing in $\mathcal{P}$. Let $S$ be the set of endpoints of each $P_3$ in $\mathcal{P}$, $S_X = S \cap X$, and $S_Y = S \cap Y$. Now, let
$$
E' = E(\mathcal{P}) \cup \{xv \mid v \in S_X\} \cup \{yv \mid v \in S_Y\} \cup \{xy, xx', x'y', yy'\}.
$$
It is easy to see that $H' = (V(H), E')$ is a cactus with $4|V(G)|/3 + 4$ edges, where the cycles of $H'$ are the union of a $P_3$ in $G$ and $x$ or $y$. In particular, $H'$ is a subgraph of $H$, obtained by deleting exactly $k$ edges from $H$.

For the other direction, let $H'$ be a cactus subgraph of $H$ obtained by deleting at most $k$ edges. From the definition of $k$, it follows that $|E(H')| \geq 4|V(G)|/3 + 4$, thus implying $H'$ is not acyclic. Let $c$ the number of cycles in $H'$, and let $\ell_1 , \ldots , \ell_c$ be their respective lengths. 

\begin{claim}\label{claim:cactus_nph}
The number of cycles $c$ in $H'$ equals $|V(G)|/3 + 1 = (|V(H')|-1)/3$.
\end{claim}

\begin{proof}[\textit{Proof of Claim~\ref{claim:cactus_nph}}]
Deleting an edge of each cycle of $H'$ would turn it into a tree. Hence, $|E(H')| = |V(H')| + c - 1 = |V(G)| + c + 3$. 
On the other hand, $|E(H')| \geq 4|V(G)|/3 + 4$, thus it follows that $|V(G)| + c + 3 \geq 4|V(G)|/3 + 4$. Hence, this implies that $c \geq |V(G)|/3 + 1 = (|V(H')|-1)/3$.

Since $H'$ is bipartite, $\ell_i \geq 4$ for every cycle in $H'$, thus it follows that $|E(H')| \geq  \sum \ell_i \geq 4c$. But then $|V(H')| + c - 1 \geq 4c$, which implies $c \leq (|V(H')|-1)/3$.
This finishes the proof of the claim.
\end{proof}

Notice that by replacing $c$ in the equation $|E(H')| = |V(H')| + c - 1$, it follows that exactly $k$ edges were removed from $H$. Also note that the equality must hold in $|E(H')| \geq  \sum \ell_i \geq 4c$, implying that $\ell_1 = \ldots = \ell_c = 4$ and that each edge belongs to exactly one $C_4$. Since $x'$ and $y'$ belong to exactly one cycle, namely $\mathcal{C}= xx'y'yx$, all the other cycles must contain vertices from $G$.
Finally, every degree 2 vertex of each cycle $C$ in $H'$ that is not $\mathcal{C}$ lies in $G$. Consider a partition into $P_3$'s as follows. For each cycle $C$ such that either $x \in C$ or $y \in C$, take the $P_3$ consisting of $C \setminus \{x\}$ or $C \setminus \{y \}$, respectively. 
We will show that for each $C$ such that all its vertices lie in $G$, exactly one of its vertices $v$ lies in another cycle of the cactus. Notice that exactly one edge incident to $v$ is either $xv$ or $yv$. Suppose without loss of generality that this is $xv$.
Since $G$ is a subcubic graph, each vertex $v$ in such a cycle $C$ lies in at most two cycles of the cactus. More precisely, $v$ belongs to the cycle $C$ and at most to a cycle $C'$ given by traversing the remaining neighbour $w$ of $v$ in $G$, if there is one, and then completing the cycle traversing through $x$, $x'$, $y$, $y'$ and then back to $v$. 
Suppose that there are two vertices $v$ and $v'$ in $C$ such that both lie in other cycles $D$ and $D'$ like the ones described in the previous paragraph. Then, the edges of the $vv'$-path in $C$ also belong to another cycle, namely the one obtained by concatenating the paths $v\ldots v'$, $v'x$ and $vx$, using edges of $E(H') \setminus C$, a contradiction.
Hence, if $v$ is the only vertex of such a cycle $C$ belonging to another cycle, we may take the $P_3$ consisting of $C \setminus \{v\}$. This gives a partition into $P_3$'s of $G$.
\end{proof}

\subsection{Sufficient conditions for hardness results within bipartite graphs}\label{sec:generalization} 

The previous edge-deletion problem remains hard even if the input graph is bipartite. In this section, we give a set of sufficient conditions for a property $\Pi$ such that $\Pi$-deletion remains hard within bipartite graphs. 

First, notice these useful facts:

\begin{remark}\label{obs:conditions1}
Edge-deletion maintains bipartiteness.
\end{remark}

\begin{remark}\label{obs:conditions2}
A bipartite claw-free acyclic graph is the union of paths.
\end{remark}

The following result summarizes necessary conditions for the graph class $\Pi$ in order to obtain hardness for the $\Pi$-deletion problem.

\begin{theorem}\label{teo:structural_prop}
Let $\Pi$ be a graph class such that all of the following conditions hold:
\begin{enumerate}
    \item[(i)] $\Pi$ contains $P_n$, for all $n$;
    \item[(ii)] $\Pi$ is claw-free;
    \item[(iii)] $\Pi$ is even-hole-free.
\end{enumerate}

Then, $\Pi$-deletion is NP-hard, even when the input is restricted to planar bipartite graphs of maximum degree 3, or chordal bipartite graphs.
\end{theorem}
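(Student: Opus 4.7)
The plan is to reduce from \ham, which, as the paper already notes, is NP-hard on planar bipartite subcubic graphs (\cite{itai1982hamilton}) and on chordal bipartite graphs (\cite{bertossi1986hamiltonian}). The conditions (i)--(iii) interact very cleanly in the bipartite setting: by Remark~\ref{obs:conditions1} any subgraph obtained from a bipartite input by edge deletion is still bipartite, so all its cycles are even; combined with condition (iii), the resulting graph must be acyclic. Then, being acyclic, bipartite, and claw-free, Remark~\ref{obs:conditions2} forces it to be a disjoint union of paths, i.e.\ a linear forest.

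With this structural observation in hand, the reduction writes itself. Given an instance $G$ of \ham in one of the two target classes, I would set $k := |E(G)| - (|V(G)|-1)$ and ask whether $G$ admits a spanning subgraph in $\Pi$ obtained by deleting at most $k$ edges. If $G$ has a Hamiltonian path $P$, then $P$ is a spanning $P_{|V(G)|}$, which belongs to $\Pi$ by condition~(i), and it is obtained from $G$ by deleting exactly $k$ edges. Conversely, any spanning subgraph $H \in \Pi$ of $G$ with at least $|V(G)|-1$ edges is, by the discussion above, a linear forest; but a linear forest on $|V(G)|$ vertices has at most $|V(G)|-1$ edges, with equality only when it is a single path, which is then a Hamiltonian path of $G$. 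Since the reduction is clearly polynomial and preserves the planar/subcubic and chordal-bipartite restrictions on the input, hardness transfers.

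The only genuinely non-trivial point is the structural argument combining the three conditions, and this is essentially already packaged in Remarks~\ref{obs:conditions1} and~\ref{obs:conditions2}; condition~(i) is what lets the completeness side of the reduction go through, since it guarantees that an actual Hamiltonian path of $G$ certifies a valid $\Pi$-deletion. I do not expect any real obstacle beyond checking that the two cited NP-hardness results for \ham indeed hold in the exact input classes claimed in the statement, so that the same reduction yields hardness within both planar bipartite subcubic graphs and chordal bipartite graphs.
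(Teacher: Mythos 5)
Your proposal is correct and follows essentially the same route as the paper: use bipartiteness preservation together with conditions (ii) and (iii) to force any $\Pi$-subgraph of a bipartite input to be a union of paths, then reduce from \ham with budget $|E(G)|-(|V(G)|-1)$, invoking the known hardness of \ham on planar bipartite subcubic and chordal bipartite graphs. No substantive differences from the paper's argument.
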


\begin{proof}
\suggestion{}{Assume that the input is a bipartite graph.} It follows from Remark~\ref{obs:conditions1}, $(ii)$ and $(iii)$ that the final graph obtained after $\Pi$-deletion will be a bipartite claw-free acyclic graph. Moreover, it follows from Remark~\ref{obs:conditions2} and $(i)$ that a solution exists and that the resulting graph is the union of paths.

For the NP-hardness, we give a reduction from \ham. 

Let us consider a bipartite graph $G$. From the previous observations, any subgraph belonging to $\Pi$ has at most $|V(G)|-1$ edges. Furthermore, the equality holds if and only if the resulting graph is a path, implying that the original graph has a hamiltonian path. 
On the other hand, if the original graph has a hamiltonian path, removing all other $|E(G)| - (|V(G)| - 1)$ edges will turn the graph into a path. Hence, $G$ is hamiltonian if and only if there is a $\Pi$-deletion of size $|E(G)| - (|V(G)| - 1)$.

Finally, since \ham is hard even for planar bipartite graphs of maximum degree 3 and chordal bipartite graphs, the same argument shows that $\Pi$-deletion is hard even restricted to these classes.
\end{proof}

\begin{corollary}
The $\Pi$-deletion problem is NP-hard when 
$\Pi$ is the class of proper interval graphs, claw-free chordal,
power of paths, even if the input is a bipartite graph. 
\end{corollary}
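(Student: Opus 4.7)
The plan is to apply Theorem~\ref{teo:structural_prop} separately to each of the three target classes by verifying the three structural hypotheses: (i) that $\Pi$ contains every $P_n$, (ii) that $\Pi$ is claw-free, and (iii) that $\Pi$ is even-hole-free. Once these are checked, NP-hardness even on bipartite inputs follows immediately, so the entire argument reduces to three short structural verifications without any new reduction.

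For proper interval graphs, (i) holds because every $P_n$ admits a proper interval model; (ii) is Roberts's classical characterization of proper interval graphs as the claw-free interval graphs; and (iii) is immediate since proper interval graphs are chordal and therefore contain no induced cycle of length greater than $3$. For the class of claw-free chordal graphs, (i) holds because every path is both claw-free and chordal, (ii) holds by definition, and (iii) follows from chordality.

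For powers of paths, the cleanest route is to exhibit a proper interval model directly: assigning to the $i$-th vertex the interval $[i, i+k]$ produces a family of intervals of equal length whose intersection graph is exactly $P_n^k$. Hence $P_n^k$ is a proper interval graph, so conditions (ii) and (iii) are inherited from the previous paragraph, while condition (i) holds because $P_n = P_n^1$. Invoking Theorem~\ref{teo:structural_prop} on each of the three classes then yields the corollary. I do not expect any genuine obstacle: the only small point of care is clarifying that ``power of paths'' is interpreted as the family $\{P_n^k : n,k \geq 1\}$, so that every member is seen to satisfy the three hypotheses.
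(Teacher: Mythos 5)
Your proposal is correct and matches the paper's (implicit) argument: the corollary is stated as a direct consequence of Theorem~\ref{teo:structural_prop}, and your verifications that proper interval graphs, claw-free chordal graphs, and powers of paths each contain all paths, are claw-free, and are even-hole-free (via chordality or the proper interval model $[i,i+k]$ for $P_n^k$) are exactly the checks needed. No gap.
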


To the best of our knowledge, these are the first proofs of NP-hardness for claw-free chordal graphs and
power of paths. 
The case of proper interval graphs is proven in~\cite{LewisYannakakis80}; our contribution to this specific result is by restricting the input class to bipartite graphs. 

Notice that one could include paths in the previous corollary since they fulfill the listed properties in Theorem~\ref{teo:structural_prop}. However, the problem for paths is exactly the hamiltonian path problem, which is already known to be NP-hard even when the input is restricted to bipartite graphs. 


Finally, an interesting remark is that, if a class has bounded treewidth, it follows from~\cite{saitoh2020fixedtreewidthefficient} that deletion to proper interval graphs is polynomial-time solvable. In particular, this implies that the problem is easy for trees, cacti, and outerplanar graphs, which is a nice contrast with the hardness for planar bipartite graphs of maximum degree $3$ and chordal bipartite graphs.

\subsection{Subclasses of forests}\label{sec:subclasses_trees}


Inspired by the fact that modification problems turn out to be tractable when restricting the target class to trees, we study the deletion problem from general graphs to certain graph classes that are somewhat similar to trees. 
The first target class we consider is constellations. 

\paragraph{Constellations}

We show \delcons is equivalent to \domiset in the following way.

\begin{theorem}\label{teo:constellationsequivlance}
A graph $G$ has a dominating set of size $k$ if and only if it is possible to turn $G$ into a constellation by deleting $|E(G)|-|V(G)|+k$ edges.
\end{theorem}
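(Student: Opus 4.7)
The plan is to prove the equivalence by exploiting a structural identity for constellations: any spanning constellation subgraph $C \subseteq G$ with $s$ components has exactly $|V(G)|-s$ edges, since each star $K_{1,t}$ contributes $t+1$ vertices and $t$ edges. Consequently, deleting $|E(G)| - |V(G)| + s$ edges is equivalent to keeping a spanning constellation with $s$ components. So the theorem reduces to matching the number of stars in a spanning constellation with the size of a dominating set.

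For the forward direction, I would start with a dominating set $D$ of size $k$ and build a spanning constellation as follows. For each $v \in V(G)\setminus D$, condition (domination) ensures we can pick some neighbor $w(v)\in D$, and we keep only the edge $vw(v)$; all other edges of $G$ are deleted. The resulting subgraph $C$ has vertex set $V(G)$, is a forest (each non-$D$ vertex has exactly one retained incident edge, and that edge goes to $D$), and in fact each $w \in D$ together with the vertices $\{v \notin D : w(v)=w\}$ induces a star centered at $w$ (possibly the trivial star $K_1$). Hence $C$ is a spanning constellation with exactly $k$ stars, so $|E(C)|=|V(G)|-k$ and precisely $|E(G)| - |V(G)| + k$ edges are deleted.

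For the backward direction, suppose we delete $|E(G)| - |V(G)| + k$ edges to obtain a constellation $C$ on vertex set $V(G)$. Then $|E(C)|=|V(G)|-k$, and since $C$ is a forest, it has exactly $k$ connected components, each a star. Let $D$ be the set obtained by choosing a center of each star (for a trivial star $K_1$ we take its unique vertex; for $K_2$ either endpoint). Then $|D|=k$, and every $v \in V(G)$ is either a center (so $v\in D$) or a leaf of a star, in which case it is adjacent in $C$ (and thus in $G$) to the center of that star, which lies in $D$. Therefore $D$ is a dominating set of size $k$.

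The main obstacle is essentially bookkeeping: one must be careful that trivial stars (isolated vertices) are admitted as components of a constellation, so that the formula $|E|=|V|-(\text{components})$ holds in both directions, and that a dominating vertex with no assigned neighbor in the forward construction still contributes a (trivial) star to the count. Once these corner cases are handled, the counting argument is immediate, and the proof is essentially a bijection between spanning constellations of $G$ with $k$ components and ``dominating partitions'' of $V(G)$ of size $k$.
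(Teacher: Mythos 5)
Your proof is correct and follows essentially the same argument as the paper: build a spanning constellation from a dominating set by keeping one edge from each non-dominating vertex to a chosen dominator, and conversely take the star centers of a spanning constellation as the dominating set, with the count $|E|=|V|-(\text{number of components})$ for forests doing the bookkeeping. Your explicit handling of trivial stars ($K_1$ components) is a minor refinement of the same reasoning, not a different route.
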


\begin{proof}
Let $G=(V,E)$ be a graph.
Let $D \subseteq V$ be a dominating set of $G$ such that $|D| = k$. Let us consider the subgraph $H$ of $G$ as follows. Let $V(H)= D \cup N(D) = V(G)$, and for each $v \not\in D$, choose exactly one edge incident to a vertex $u \in D$. Those are all the edges of $H$, and we denote this set by $F$. Notice that $|F|=|V(G)|-|D|=|V(G)|-k$. Hence, $H$ can be obtained from $G$ by deleting $|E(G)|-|V(G)|+k$ edges.

Conversely, let $H$ be a spanning constellation of $G$ obtained by the deletion of  $|E(G)|-|V(G)|+k$ edges, and let $D =\{ u \mid u \mbox{ is the center of a star in } H \}$. Notice that $|E(H)| = |V(G)|-k$ and, since it is acyclic, it contains exactly $k$ connected components. Hence $|D| = k$, as the connected components of $H$ are stars and $D$ has precisely one vertex from each star. Since $D$ is a dominating set of $H$, it is also a dominating set of $G$ and the result follows.
\end{proof}

For the general case, we get the following corollary.

\begin{corollary}\label{teo:constellations}
The problem \delcons is NP-complete for general graphs.
\end{corollary}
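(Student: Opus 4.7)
The plan is to combine the bijective correspondence established in Theorem~\ref{teo:constellationsequivlance} with the known NP-completeness of \domiset. Membership in NP is immediate: a certificate is a set of at most $K$ edges whose deletion yields a disjoint union of stars, and both the size check and the ``disjoint union of stars'' check (equivalently, forest with no induced $P_4$, or every connected component having a universal vertex) run in polynomial time.

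For hardness, I would reduce from \domiset. Given an instance $(G,k)$ of \domiset, output the \delcons instance $(G, K)$ with $K = |E(G)| - |V(G)| + k$. The correctness argument proceeds in both directions using Theorem~\ref{teo:constellationsequivlance}. If $G$ has a dominating set of size $j\leq k$, then the theorem produces a spanning constellation obtained by deleting exactly $|E(G)|-|V(G)|+j \leq K$ edges. Conversely, if a spanning constellation $H$ of $G$ is obtained by deleting $\ell \leq K$ edges, then $H$ has $|V(G)|-\ell$ edges and hence is a disjoint union of $j := \ell - |E(G)| + |V(G)| \leq k$ stars; the centres of these stars form a dominating set of $G$ of size $j\leq k$.

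One small technicality to dispose of is that $K$ might be negative when the input is very sparse. If $k < |V(G)|-|E(G)|$, the reduction outputs a trivial no-instance; this is sound because any dominating set $D$ of $G$ satisfies $|D|\geq |V(G)|-|E(G)|$ (each vertex outside $D$ is incident to at least one edge going into $D$, so $|V(G)|-|D|\leq |E(G)|$), so the \domiset instance is also negative. The reduction is clearly polynomial time, which gives NP-hardness and, together with membership, NP-completeness.

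There is no substantive obstacle here: the only thing to be careful about is translating between the ``exactly $k$'' statement of Theorem~\ref{teo:constellationsequivlance} and the standard ``at most $k$'' decision versions of both problems, which the monotone affine relation $K = |E(G)|-|V(G)|+k$ handles cleanly.
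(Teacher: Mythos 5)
Your proposal is correct and matches the paper's (implicit) argument: the corollary is obtained exactly as you do it, by combining Theorem~\ref{teo:constellationsequivlance} with the NP-completeness of \domiset, plus the easy NP membership check; your careful handling of the ``exactly'' versus ``at most'' translation and of a possibly negative budget is fine. The only slip is cosmetic: a spanning constellation obtained by deleting $\ell$ edges has $|E(G)|-\ell$ edges (not $|V(G)|-\ell$), which is what your subsequent count of $j=\ell-|E(G)|+|V(G)|$ components actually uses.
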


\begin{remark}
Notice that \delcons remains hard for any graph classes for which \domiset is hard. In particular, this problem remains hard for planar bipartite graphs~\cite{zvervich1995induced} and split graphs~\cite{bertossi1984dominating,corneil1984clustering}. On the other hand, it can be polynomially solved within cographs and outerplanar graphs~\cite{vatshelle2012new}.
\end{remark}



\paragraph{Caterpillars}




We continue studying the general deletion problem by considering another subclass of trees, namely caterpillars, as the target class.

\begin{theorem}\label{teo:caterpillars}
The problem \delcater is NP-complete for general graphs.
\end{theorem}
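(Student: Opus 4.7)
Membership in NP is routine: given a candidate set of edges to delete, one verifies in polynomial time that the remaining graph is a tree and that removing its degree-$1$ vertices leaves a path.

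For NP-hardness, I plan to reduce from \ham. Given an instance $G=(V,E)$ with $n=|V|$, construct $H$ by attaching to each $v\in V$ a new pendant vertex $v'$; thus $V(H)=V\cup\{v':v\in V\}$ and $E(H)=E\cup\{vv':v\in V\}$. Set $k=|E|-n+1$. The claim to establish is that $G$ has a hamiltonian path if and only if $H$ can be turned into a caterpillar by deleting at most $k$ edges.

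The forward direction is immediate: if $v_1,\ldots,v_n$ is a hamiltonian path of $G$, then the subgraph of $H$ consisting of the edges of this path together with all $n$ pendant edges $v_iv_i'$ is a caterpillar whose spine is $v_1,\ldots,v_n$ and whose leaves are the $v_i'$; it has $2n-1$ edges, so exactly $k$ edges were deleted. For the converse, let $H'$ be a caterpillar obtained from $H$ by deleting at most $k$ edges. Since $H'$ is a tree on $2n$ vertices it has exactly $2n-1$ edges, so exactly $k$ edges were removed. Every vertex $v'$ has $v$ as its unique neighbor in $H$, hence $v'$ has degree at most $1$ in $H'$, and since $H'$ is connected the edge $vv'$ must be present and $v'$ must be a leaf of $H'$. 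If some $v\in V$ were also a leaf of $H'$, its only neighbor would be $v'$, forcing $\{v,v'\}$ to form a connected component of $H'$ and contradicting connectivity (for $n\geq 2$). Therefore every $v\in V$ is a non-leaf of $H'$ and lies on its spine, which is thus a path visiting exactly the vertices of $V$ using only edges of $E$; this yields a hamiltonian path of $G$.

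The main obstacle is the backward direction: ensuring that each original vertex is forced onto the spine so that the spine corresponds precisely to a hamiltonian path. The pendant gadget accomplishes this by making each $v$ indispensable as the unique connector of $v'$ to the rest of the tree, so $v$ cannot itself be a leaf without disconnecting $\{v,v'\}$ from the remainder of $H'$. Since \ham is NP-complete, so is \delcater.
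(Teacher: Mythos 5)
Your reduction is correct, but it uses a different gadget from the paper's. You attach a pendant vertex $v'$ to every vertex $v$ of the \ham instance, whereas the paper subdivides every edge of $G$ exactly twice and sets $k=|E(H)|-|V(H)|+1$ (which, in both constructions, works out to $|E(G)|-|V(G)|+1$). The pendant gadget buys you a cleaner converse: since $v'$ can only be connected through $v$, the edge $vv'$ is forced into any spanning caterpillar, $v'$ is forced to be a leaf, and $v$ is forced to be a non-leaf (else $\{v,v'\}$ would be a separate component), so the spine is exactly $V(G)$ and immediately yields a hamiltonian path using only edges of $E(G)$. The paper's converse is more delicate: it must argue that each original vertex of $G$ not on the spine can be assumed to be on it, via the structure of the subdivision vertices $u_e,v_e$. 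The subdivision gadget is chosen in the paper partly because the subsequent remark extends the hardness to classes closed under double subdivision of edges (e.g.\ planar bipartite graphs); your pendant gadget supports an analogous remark for classes closed under attaching pendant vertices, which also covers planar bipartite inputs, so nothing essential is lost. The only points worth tightening are trivial edge cases: if $|E(G)|<|V(G)|-1$ then $k<0$ and both instances are negative ($G$ disconnected, hence $H$ disconnected and no deletion yields a caterpillar), and the degenerate case $n=1$; neither affects the validity of the reduction.
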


\begin{proof}
The problem is clearly in NP.
We prove the hardness of \delcater by reducing from \ham.

Let $G=(V,E)$ be an instance of \ham. We define an instance $H,k$ of \delcater as follows. Let $H=(W, F)$ be a graph obtained by subdividing each edge of $G$ precisely twice. In other words, for each edge $e=uv \in E$, add two vertices $u_e, v_e$ such that $uu_e, u_ev_e, v_ev \in F$. \suggestion{}{Finally, let $k=|E(H)|-|V(H)|+1$.}

Suppose that there is a hamiltonian path in $G$. Consider the following spanning caterpillar $C$ of $H$. The spine is the hamiltonian path of $G$, where instead of traversing the edge $e=uv \in E$, we traverse the path $u, u_e, v_e, v$ in $H$. Notice that the only vertices that are not visited by the spine in $H$ are those intermediate new vertices $u_e, v_e$ for some $e=uv \in E$ that is not in the hamiltonian path. For these pairs of unvisited vertices, add the edges $u u_e$ and $v_e v$ to $C$. The resulting subgraph is a spanning caterpillar of $H$. \suggestion{}{Since every caterpillar is a tree, we know that $|E(C)|=|V(C)|-1=|V(H)|-1$, implying that $C$ can be obtained from $H$ by removing exactly $k=|E(H)|-|V(H)|+1$ edges.}

Conversely, \suggestion{a spanning caterpillar $C$ of $H$ has a spine $S$.}{let $C$ be a spanning caterpillar of $H$, with spine $S$. Clearly, $|E(H)-E(C)|=k$.} Let us see that we can extract a hamiltonian path in $G$ from this spine. 
Suppose that there is a vertex $u \in G$ that is not a part of the spine $S$ in $H$. Notice that, in a spanning caterpillar of $H$, all vertices should be incident to at least one edge in $C$. The vertex $u$ in $H$ is adjacent only to intermediate new vertices, thus there is an edge $u_eu \in C$ and such that $u_e \in S$. However, the only vertex adjacent to $u_e$ besides $u$ is another intermediate vertex $v_e$. Hence, both $u_e,v_e \in S$ and therefore we can consider $u$ as a vertex in $S$.
The hamiltonian path is precisely the path resulting of traversing the edges $uv \in G$ for which the path $u,u_e,v_e, v$ is a part of the spine of $C$ in $H$.
\end{proof}

\begin{remark}
Similarly to the case of \delcons, notice that the same proof given for Theorem~\ref{teo:caterpillars} suffices to show that \delcater remains hard for graphs classes \suggestion{}{that are closed for double subdivision of edges and} for which \ham is hard \suggestion{for double subdivision of edges}{}, which is the case for planar bipartite graphs~\cite{itai1982hamilton}.  
\end{remark}

\section{Algorithms for cactus deletion}\label{sec:algorithms}

For the natural classes considered in Section~\ref{sec:edge-del-complexity}, the complexity of \delcons and \delcater is very well understood, since they are closely related to the extensively studied \ham and \domiset problems. In this section we devote our attention to \delcactus. 
Recall that this problem is NP-complete even for bipartite graphs. 
This motivated the search for positive results on this problem.
In Section~\ref{sec:cactus_chordal} we study the problem within chordal graphs, showing that it turns out to be polynomial-time solvable. However, this result does not lead trivially to such an algorithm. With this in mind, in Section~\ref{sec:qt_cactus} we focus on subclasses of chordal graphs, and give an explicit efficient algorithm when the input is a quasi-threshold graph.


\subsection{\delcactus within chordal graphs}\label{sec:cactus_chordal}

In this section we will prove that cactus deletion from chordal graphs is polynomial-time solvable. To do this, we will resort to a result by Lovász~\cite{Lovasz80} which shows that, given a $3$-uniform hypergraph, finding the maximum Berge-acyclic subhypergraph can be solved in polynomial time.

We start this section by giving some definitions that will be useful in the sequel. Afterward, given a chordal graph $G$, we define an auxiliary hypergraph and use the aforementioned result by Lovász to find a maximum spanning cactus of $G$.

\begin{definition}
A \emph{hypergraph} is a pair $\mathcal{H} = (V,\mathcal{E})$ where $V$ is a finite set and $\mathcal{E}$ is a set of nonempty subsets of $V$. A \emph{subhypergraph} $\mathcal{S} = (V',\mathcal{E}')$ of $\mathcal{H}$ is a hypergraph such that $V' \subseteq V$ and $\mathcal{E}'\subseteq \mathcal{E}$.
Two vertices of a hypergraph are called \emph{adjacent} if some hyperedge contains both of them.

A \emph{Berge-cycle} is a sequence $(E_1, x_1, \ldots, E_n, x_n)$ with $n\geq 2$ such that:
\begin{enumerate}
    \item $E_i$ are distinct hyperedges,
    \item $x_i$ are distinct vertices, and
    \item for every $1 \leq i \leq n$, $x_i$ belongs to $E_i$ and $E_{i+1}$, where subindices are modulo $n$.
\end{enumerate}
A hypergraph is \emph{Berge-acyclic} if it contains no Berge-cycle.
\end{definition}

Let $G=(V,E)$ be a chordal graph. We denote by $\mathcal{H}(G) = (V, \mathcal{E})$ the hypergraph whose hyperedges are precisely those subsets that induce a $3$-cycle in $G$.


Notice that, if $G'$ is a subgraph of $G$ such that $G'$ is a cactus and each cycle has length $3$, then $\mathcal{H}(G')$ is a Berge-acyclic subhypergraph of $\mathcal{H}(G)$.
Also notice that two vertices are adjacent in 
$\mathcal{H}(G)$ if there is an hyperedge that contains both of them, this is, if both vertices lie in some $3$-cycle of $G$. In particular, this implies that there is an edge in $G$ that joins them.

Given a set of edges $E' \subseteq E$, we denote by $G[E']$ the subgraph $G'=(V', E')$ of $G$ where $V'=\{ v \in V(G) : \exists e \in E' \mbox{ incident to } v \}$.
Given a hyperedge $e \in \mathcal{E}$, we denote by $G[e]$ the subgraph of $G$ induced by edges of $G$ that are contained in the hyperedge $e$. Analogously, given a subset of hyperedges $\mathcal{E'} \subseteq \mathcal{E}$, we define $G[\mathcal{E'}]$ as $\bigcup_{e \in \mathcal{E'}} G[e]$. Intuitively, the vertices of $G[\mathcal{E'}]$ are the vertices that occur in hyperedges of $\mathcal{E'}$, and two vertices $u$ and $v$ are adjacent in $G[\mathcal{E'}]$ if there is some $e \in \mathcal{E'}$ such that $\{u,v\} \subset e$.


\begin{proposition}
Let $G$ be a chordal graph, and let $\mathcal{H'}(G)=(V', \mathcal{E'})$ be a Berge-acyclic subhypergraph of $\mathcal{H}(G)$. If $F_1, F_2, \ldots, F_k$ are the connected components of $G[\mathcal{E'}]$, then $F_1, F_2, \ldots, F_k$ is a forest of cacti and every cycle of $F_i$ is a $3$-cycle, for each $1 \leq i \leq k$.
\end{proposition}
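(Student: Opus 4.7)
The plan is to extract from Berge-acyclicity two key consequences. First, any two hyperedges of $\mathcal{E'}$ share at most one vertex; otherwise they would form a Berge-cycle of length~$2$. Since every hyperedge of $\mathcal{E'}$ is a triangle of $G$, this immediately implies that every edge of $G[\mathcal{E'}]$ belongs to a \emph{unique} hyperedge of $\mathcal{E'}$. Second, there is no cyclically arranged sequence of distinct hyperedges joined by distinct single-vertex intersections, which is just the definition of Berge-acyclicity.

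To show that every cycle of $G[\mathcal{E'}]$ is a $3$-cycle, I would consider a cycle $C = v_0 v_1 \ldots v_{n-1}$ in $G[\mathcal{E'}]$ (indices mod $n$), and let $t_i$ denote the unique hyperedge of $\mathcal{E'}$ containing the edge $v_i v_{i+1}$. A short case analysis using $|t_i|=3$ and the distinctness of cycle vertices shows that $t_i = t_j$ forces $j \equiv i \pm 1 \pmod{n}$, so the sequence $(t_0, \ldots, t_{n-1})$ decomposes into maximal runs of equal consecutive hyperedges, each of size $1$ or $2$ (a run of size~$3$ would require a triangle to contain $4$ distinct vertices). A size-$2$ run at position $i$ forces $t_i = \{v_i, v_{i+1}, v_{i+2}\}$, hence the shortcut edge $v_i v_{i+2}$ also lies in $G[\mathcal{E'}]$. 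Contracting each size-$2$ run by replacing the subpath $v_i v_{i+1} v_{i+2}$ with the edge $v_i v_{i+2}$ then produces a shorter cycle $C^*$ in $G[\mathcal{E'}]$ whose edges come from pairwise distinct hyperedges, with consecutive ones distinct by construction.

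The main obstacle will be ruling out $C^*$ of length $\ell \geq 3$. For $\ell \geq 4$, the hyperedges of $C^*$ together with the shared cycle vertices directly yield a Berge-cycle, contradicting Berge-acyclicity. For $\ell = 3$, the three involved hyperedges pairwise share at least one vertex; by the first key observation these pairwise intersections are three distinct singletons, which assembles into a Berge $3$-cycle and again yields a contradiction. Lengths $\ell \in \{1, 2\}$ are incompatible with $C^*$ being a cycle of a simple graph. Hence $n = 3$ and every cycle of $G[\mathcal{E'}]$ is a triangle. Finally, by the same Berge $3$-cycle argument, a triangle of $G[\mathcal{E'}]$ cannot be obtained from three different hyperedges of $\mathcal{E'}$, so each such triangle coincides with exactly one hyperedge of $\mathcal{E'}$; since distinct hyperedges are edge-disjoint, no edge of $G[\mathcal{E'}]$ can belong to two distinct $3$-cycles. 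Therefore each connected component $F_i$ is a cactus whose cycles are all $3$-cycles, as required.
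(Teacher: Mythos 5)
Your proof is correct and rests on the same engine as the paper's---Berge-acyclicity puts each edge of $G[\mathcal{E'}]$ in a unique hyperedge, and the hyperedges along a long cycle assemble into a Berge-cycle---but your route through it is genuinely different. The paper works only with an \emph{induced} cycle of length greater than $3$: inducedness forces every hyperedge to meet the cycle in at most two consecutive vertices, so the hyperedges supplying consecutive edges are automatically distinct and yield a Berge-cycle at once, and the cactus property is then read off from the fact that distinct induced cycles share at most one vertex. You instead treat arbitrary cycles via the run-contraction, shortcutting two consecutive edges lying in a common hyperedge through the third edge of that triangle; this costs some bookkeeping, but your $\ell=3$ case makes explicit something the paper leaves implicit, namely that a triangle of $G[\mathcal{E'}]$ cannot be assembled from three distinct hyperedges, so every triangle of $G[\mathcal{E'}]$ is itself a hyperedge---exactly what is needed to conclude cleanly that no edge lies in two cycles. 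Two small points should be patched: the claim that runs have size at most $2$ fails precisely when the whole cycle is a single run, i.e.\ when $C$ is itself a hyperedge triangle (the allowed outcome), so that case should be split off before contracting; and $\ell\in\{1,2\}$ should not be dismissed on the grounds that $C^*$ would not be a cycle of a simple graph (whether the contracted object is a cycle is exactly what is in doubt there), but rather because those configurations force two distinct hyperedges of $\mathcal{E'}$ to share two vertices---a Berge $2$-cycle, which your first key observation already excludes.
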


\begin{proof}
First, let us see that every cycle in $F_i$ is a $3$-cycle. 
Toward a contradiction, let $C= x_1 x_2 \ldots x_n x_1$ be an induced cycle in $F_i$ of length $n>3$.
Since $C$ is an induced cycle, there are no chords between vertices in $C$. Thus, every hyperedge that intersects $C$ contains at most two vertices of $C$. Furthermore, each hyperedge that shares two vertices with $C$ contains precisely two consecutive vertices of $C$, for if not we find a chord in the cycle $C$. In addition, notice that no edge in $F_i$ lies in two distinct hyperedges of $\mathcal{H'}(G)$ since it is a Berge-acyclic subhypergraph of $\mathcal{H}(G)$. For simplicity, let us denote by $(E_i,x_i)$ \suggestion{to}{} the hyperedge in $\mathcal{H'}(G)$ that gives us the edge $x_i x_{i+1}$ in the cycle $C$. However, $\{(E_i, x_i) : i=1, \ldots, n \}$ is a Berge-cycle in $\mathcal{H'}(G)$ and this results in a contradiction.

Also notice that, if $C_1$ and $C_2$ are two distinct induced cycles in $F_i$, then $C_1$ and $C_2$ share at most one vertex. This follows once more from the fact that no edge in $F_i$ lies in two distinct hyperedges of $\mathcal{H'}(G)$. 
It follows immediately from the previous discussion that each $F_i$ is a cactus.
\end{proof}

Let $D \subseteq E$ be those edges that do not lie in any $3$-cycle of $G$. Notice that these edges are not considered when obtaining $\mathcal{H}(G)$.

\begin{remark}
Let $\mathcal{H'}(G)=(V', \mathcal{E'})$ be a maximum Berge-acyclic subhypergraph of $\mathcal{H}(G)$. Then, $G[\mathcal{E'}] \cup D$ is a forest of cacti.
\end{remark}

Consider first $G'= G[\mathcal{E'}] \cup D$, for some maximum Berge-acyclic hypergraph $\mathcal{H'}(G)=(V', \mathcal{E'})$ of $\mathcal{H}(G)$.
Let $F_1$ and $F_2$ be two connected components of $G'$ such that $F_1$ and $F_2$ are connected in the original graph $G$ by some missing edges of $G$. Since the missing edges belong to $3$-cycles that are not considered in $\mathcal{H'}(G)$, there exists a vertex $x$ in $F_1$ such that $x$ is adjacent (in $G$) to two distinct vertices $y_1, y_2$ in $F_2$, and such that $\{x, y_1, y_2\}$ is a hyperedge of $\mathcal{H}(G) \setminus \mathcal{H'}(G)$, or vice versa. Let us assume without loss of generality that $x$ \suggestion{}{is} in $F_1$. This gives a procedure to construct the spanning cactus, in the following way: 
\begin{enumerate}
\item Do while there is more than one connected component in $G'$
\item Find two connected components $F_1$ and $F_2$ such that they can be joined by a single edge $e$ that belongs to $E \setminus E(G')$. Define $F := F_1 \cup F_2 \cup \{e\}$.
\end{enumerate}
Let us denote with $X(G')$ the resulting subgraph. 

\begin{lemma}
Let $G$ be a \suggestion{}{connected} chordal graph. There is always an optimal spanning cactus of $G$ such that each cycle has length $3$. We call this kind of solution a triangular-SC of $G$. 
\end{lemma}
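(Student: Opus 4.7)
The plan is a local exchange argument. Starting from any optimal spanning cactus $H$ of $G$, I will repeatedly modify $H$ by swapping a pair of edges, each time preserving both optimality and the spanning cactus property, while strictly decreasing a potential that measures how far $H$ is from being triangular. A convenient choice is
\[
\Phi(H) \;=\; \sum_{\text{cycles } C \text{ of } H}\bigl(|C|-3\bigr),
\]
so that a triangular-SC is precisely an optimal spanning cactus with $\Phi=0$. Since $\Phi$ is a nonnegative integer, it suffices to show that whenever $\Phi(H)>0$ there exists an optimal spanning cactus $H'$ with $\Phi(H')<\Phi(H)$.

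Pick any cycle $C'=v_1v_2\cdots v_kv_1$ of $H$ with $k\geq 4$. Chordality of $G$ will provide the essential chord: $G[V(C')]$ is chordal as an induced subgraph of a chordal graph, and every chordal graph admits a simplicial vertex, so let $v_i$ be a simplicial vertex of $G[V(C')]$. Its neighborhood in $G[V(C')]$ is a clique and contains $v_{i-1},v_{i+1}$ (both adjacent to $v_i$ via edges of $C'\subseteq G$), hence $v_{i-1}v_{i+1}\in E(G)$. Moreover, $v_{i-1}v_{i+1}\notin E(H)$: otherwise the triangle on $\{v_{i-1},v_i,v_{i+1}\}$ would be a cycle of $H$ sharing the two edges $v_{i-1}v_i$ and $v_iv_{i+1}$ with $C'$, contradicting the cactus property.

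Then I would perform the swap $H':=\bigl(H\setminus\{v_{i-1}v_i\}\bigr)\cup\{v_{i-1}v_{i+1}\}$ and verify that $H'$ is a spanning cactus of $G$ with $|E(H')|=|E(H)|$ in which $C'$ has been replaced by a cycle of length $k-1$, yielding $\Phi(H')=\Phi(H)-1$. The key point is the block-tree structure of $H\setminus\{v_{i-1}v_i\}$: deleting one edge of the cycle $C'$ turns $C'$ into a path whose edges become bridges, and since distinct blocks of a cactus share at most a cut vertex, the unique path from $v_{i-1}$ to $v_{i+1}$ in $H\setminus\{v_{i-1}v_i\}$ is the long arc $v_{i-1}v_{i-2}\cdots v_{i+1}$ avoiding $v_i$. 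Adding the chord $v_{i-1}v_{i+1}$ therefore creates exactly one new cycle, of length $k-1$, and no edge of $H'$ lies in two cycles. The vertex $v_i$ remains attached to the rest via $v_iv_{i+1}$, so $H'$ is still spanning and connected. Iterating the exchange drives $\Phi$ to $0$ in finitely many steps, producing the desired triangular-SC.

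The main obstacle I expect is the global verification that the swap preserves the cactus property: not merely that a single new cycle of the expected length is created within (what remains of) $C'$, but that no other cycle is accidentally formed by the new chord in some far-away part of $H$. This is exactly where the preliminary observation $v_{i-1}v_{i+1}\notin E(H)$ and the block-tree rigidity of a cactus (two distinct blocks share at most a single cut vertex, so the two vertices $v_{i-1},v_{i+1}$ have no "outside" paths between them in $H\setminus\{v_{i-1}v_i\}$) are used in an essential way.
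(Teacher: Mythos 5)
Your proof is correct and follows essentially the same route as the paper's: a local exchange argument that uses chordality to find a chord of a long cycle and swaps it for a cycle edge, preserving the edge count while shortening cycles (the paper phrases this via a minimal counterexample with an arbitrary chord, you via a potential function with a simplicial vertex yielding the short chord $v_{i-1}v_{i+1}$). Your version is in fact a bit more careful, since you explicitly verify that the chord is not already in $E(H)$ and that the swap creates exactly one new cycle, points the paper's proof leaves implicit.
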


\begin{proof}
Suppose that the statement is false and let $H$ be a counterexample maximizing the number of cycles of length $3$ and, subject to that, minimizing the length of the minimum cycle of length greater than $3$. Let $C = \{x_1, \ldots, x_k\}$ be a minimum cycle with $ k > 3$. Since $G$ is chordal, $G[C]$ must contain a chord. Without loss of generality, let $x_1x_i$ be such chord. We claim that $H' = (V(H), E(H) - \{x_1x_2\} \cup \{x_1x_i\})$ is an optimal spanning cactus of $G$.

Note that $C' = \{x_1, x_i, \ldots, x_k\}$ is a cycle in $H'$, and since each edge of $E(C') \setminus \{x_1x_i\}$ belongs to exactly one cycle of $H$, it also belongs to exactly one cycle in $H'$, namely $C'$. Moreover, edges in $\{x_jx_{j+1} \mid 2 \leq j \leq i-1 \}$ belong to one cycle in $H$ and to no cycle in $H'$. Hence $H'$ is an spanning cactus of $G$ with at least the same number of cycles of length $3$ as $H$ and since the length of $C'$ is smaller than the length of $C$ the result follows.
\end{proof}

\noindent We have the following remark as a direct consequence of the previous lemma.

\begin{remark}\label{remark:chordal1}
A sub-optimal triangular-SC of $G$ always has fewer triangles than an optimal triangular-SC of $G$.
\end{remark}

Using all the previous results, it follows that:

\begin{theorem}
Let $G'= (V(G), E(G[\mathcal{E'}]) \cup D)$, for some maximum Berge-acyclic hypergraph $\mathcal{H'}(G)=(V', \mathcal{E'})$ of $\mathcal{H}(G)$. Then, $X(G')$ is a maximum spanning cactus of $G$. 
\end{theorem}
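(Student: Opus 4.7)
The plan is to establish three things in sequence: (a) $X(G')$ is a spanning cactus of $G$ whose every cycle is a triangle (a \emph{triangular-SC}); (b) among all triangular-SCs of $G$, $X(G')$ attains the largest possible number of triangles; and (c) combining (b) with the preceding lemma and Remark~\ref{remark:chordal1} yields that $X(G')$ has the largest possible number of edges, hence is a maximum spanning cactus.

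For (a), I would invoke the proposition above: the connected components of $G[\mathcal{E'}]$ form a forest of cacti whose only cycles are triangles. The edges of $D$, by definition, lie in no triangle of $G$, so adding them introduces no new cycle. Each edge $e$ inserted by the procedure $X$ is chosen as a single edge joining two distinct components of the current graph and so is a bridge that cannot sit on any cycle of the final graph. Existence of such a joining edge at every step follows from the connectedness of $G$ together with the argument immediately preceding the theorem statement: any two adjacent components of $G'$ are linked by some edge of $E\setminus E(G')$ coming from a triangle of $\mathcal{H}(G)\setminus \mathcal{H'}(G)$. Consequently, $X(G')$ is connected, spans $V(G)$, and its cycles coincide with those of $G'$, each of length three.

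For (b), the key is a correspondence between triangular-SCs of $G$ and Berge-acyclic subhypergraphs of $\mathcal{H}(G)$. Given a triangular-SC $C$, let $\mathcal{E}_C$ collect the triples of vertices inducing a triangle of $C$; each such triple is a hyperedge of $\mathcal{H}(G)$. A Berge-cycle inside $\mathcal{E}_C$ would yield a closed walk in $C$ whose edges come from two distinct triangles of $C$, contradicting the cactus property. Hence $\mathcal{E}_C$ is Berge-acyclic, giving $|\mathcal{E}_C|\leq |\mathcal{E'}|$. In the reverse direction, by the proposition the triangles of $G'$ (and therefore of $X(G')$, since the edges added by $X$ create no new cycles) are in bijection with the hyperedges of $\mathcal{E'}$, so $X(G')$ realizes exactly $|\mathcal{E'}|$ triangles, attaining the bound.

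Part (c) is then essentially bookkeeping: a spanning cactus on $n$ vertices with $c$ cycles has $n-1+c$ edges, so the maximum edge count equals $n-1$ plus the maximum possible number of cycles. The preceding lemma guarantees that this optimum is already achieved by some triangular-SC, and among triangular-SCs more triangles means strictly more edges, consistently with Remark~\ref{remark:chordal1}. Since $X(G')$ maximizes the triangle count by (b), it is a maximum spanning cactus. The main obstacle I expect is the triangular-SC $\to$ Berge-acyclic direction in (b): one has to argue precisely that a putative Berge-cycle in $\mathcal{E}_C$ forces two distinct triangles of $C$ to share an edge, contradicting the cactus property of $C$. Everything else reduces to direct appeals to the proposition, the lemma, and the standard edge count for cacti.
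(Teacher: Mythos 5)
Your proposal is correct and follows essentially the same route as the paper: both arguments combine the lemma guaranteeing an optimal triangular-SC, Remark~\ref{remark:chordal1}, the edge count $n-1+c$ for spanning cacti, and the correspondence between triangular spanning cacti and Berge-acyclic subhypergraphs of $\mathcal{H}(G)$, so that maximality of $\mathcal{E'}$ forces $X(G')$ to be maximum (the paper phrases this as a contradiction with a hypothetically better triangular-SC $Y$, while you argue directly). One small caveat on the step you flag as the main obstacle: a Berge-cycle of length $n\geq 3$ among the triangles of a triangular-SC $C$ need not make two triangles share an edge --- the contradiction is rather that some edge of $C$ would lie on two cycles, its own triangle and the longer cycle obtained by threading through the consecutive triangles of the Berge-cycle --- but since the paper itself only asserts this Berge-acyclicity as an unproved observation, your treatment is at the same level of rigor.
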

 
\begin{proof}
If this is not true, then there is an optimal triangular-SC $Y$ such that $|Y|>|X|$. It follows from Remark~\ref{remark:chordal1} that $Y$ has more triangles than $X$. If we consider the corresponding hypergraph for $Y$, \suggestion{}{with one hyperedge for each triangle in $Y$}, then this gives us a Berge-acyclic subhypergraph of $\mathcal{H}(G)$ that is bigger than the maximum $\mathcal{H'}(G)=(V', \mathcal{E'})$ that we found before. This results in a contradiction.
\end{proof}



\subsection{Cactus deletion from Quasi-Threshold}\label{sec:qt_cactus}

In the previous subsection, we provided proof that cactus deletion is polynomial when the input graph is chordal. Nevertheless, it is not trivial to derive an explicit algorithm from this result, ultimately based on matroid theory. The following results complete the section by giving a simple and efficient procedure for an interesting subclass of chordal graphs, namely, quasi-threshold graphs.

\begin{lemma}\label{lemma:join}
Let $G=G_1\oplus G_2$ be a cograph. Then, there is a spanning cactus subgraph $H$ with maximum number of edges such that each cycle of $H$ is either a $C_3$ or a $C_4$ and has a vertex in $V(G_1)$ and a vertex in $V(G_2)$.
\end{lemma}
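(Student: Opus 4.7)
My plan is to first establish the cycle-length condition from general facts about cacti and cographs, and then use an exchange argument to enforce the crossing condition among maximum spanning cacti.

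For the length condition, recall that in any cactus each cycle is induced: a chord $uv$ of a cycle $C$ would put each edge of $C$ into two distinct cycles (one on each side of $uv$), contradicting the defining property that every edge lies in at most one cycle. Since $G$ is a cograph, and cographs contain no induced $P_4$, they contain no induced $C_k$ for $k \geq 5$. Thus every cycle of any spanning cactus of $G$ must be a $C_3$ or a $C_4$.

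For the crossing condition, I would argue by an exchange argument. Among all maximum spanning cacti of $G$, choose $H$ that maximizes the number of cycles meeting both $V(G_1)$ and $V(G_2)$. Suppose toward a contradiction that $H$ has a cycle $C$ entirely in, say, $V(G_1)$. Since $V(G_2)$ is nonempty, pick $v \in V(G_2)$; by definition of the join, $v$ is adjacent in $G$ to every vertex of $C$. I would then perform a local surgery: delete one edge of $C$ (two opposite edges if $|C| = 4$), add edges from $v$ to two adjacent vertices of $C$ (four edges, one to each vertex of $C$, if $|C|=4$), and, if necessary, delete a compensating edge along a shortest path in $H$ from $v$ to $V(C)$ in order to maintain the edge count. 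The result is a spanning subgraph $H'$ with at least as many edges as $H$, still connected and still a cactus, but with a crossing cycle through $v$ in place of $C$. This strictly increases the number of crossing cycles, contradicting the choice of $H$.

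The main obstacle is to verify that the exchange preserves the cactus property in every configuration of how $v$ sits in $H$ (as a pendant vertex, an internal vertex of a cycle, a cut vertex, or a vertex connected to $C$ only through a longer path). For $|C|=4$, the natural attempt to split $C$ into two crossing triangles through $v$ would add two edges outright, so if it produces a cactus it already contradicts the maximality of $H$; the subtle case is when pre-existing cycles through $v$ obstruct the split and force one to compensate, and one must check that the compensating edge can always be chosen so that the new graph remains a cactus without sacrificing edges. Identifying a uniform surgery (or organizing the case analysis carefully) is the technical heart of the argument.
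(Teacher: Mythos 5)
Your first paragraph contains a genuine error. It is true that a cycle of a cactus $H$ is chordless \emph{in $H$}, but $H$ is a spanning subgraph of $G$, not an induced one, so such a cycle need not be induced \emph{in $G$}: chords present in $G$ may simply have been left out of $H$. Hence $P_4$-freeness of $G$ does not bound the cycle lengths of spanning cacti. Concretely, $K_{3,3}=\overline{K_3}\oplus\overline{K_3}$ is a cograph and the spanning $C_6$ is even a \emph{maximum} spanning cactus of it (and $C_5$ is a spanning cactus of the cograph $K_5$), so your claim that every cycle of any spanning cactus of a cograph is a $C_3$ or $C_4$ is false. The bound on cycle lengths is part of what must be proved, and in the paper it does not come from forbidden induced subgraphs alone: for a long cycle meeting both sides one uses a chord of $G$ (guaranteed by the join) to shorten it at no cost, and for a long cycle inside one side the length bound falls out of an edge-counting argument attached to the exchange. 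Because your length claim fails, your case analysis in the second part would also have to cover long one-sided cycles, which it currently does not.

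For the crossing condition your exchange argument is the right spirit, but the surgery is left unspecified exactly where the work lies, as you yourself note (``the technical heart of the argument''). The paper resolves this with a uniform operation: since $H$ is a cactus, all paths in $H$ from a vertex $v\in V(G_2)$ to a one-sided cycle $C_k\subseteq V(G_1)$ enter $C_k$ through a single vertex $v_i$ (block/cut-vertex structure of cacti); delete the two cycle edges at $v_i$, take a maximum matching $M$ of the path $C_k-v_i$, delete the path edges outside $M$, and join both endpoints of each matching edge (and the unmatched vertex, if $k-1$ is odd) to $v$. The pieces of $C_k-v_i$ now attach to the rest of $H$ only through the new triangles at $v$, so the result is again a spanning cactus, and the edge count changes by $\lfloor (k-1)/2\rfloor-1$: for $k\ge 5$ this contradicts maximality (which is how the length bound is actually obtained), and for $k\in\{3,4\}$ it is an equal-edge exchange that replaces the one-sided cycle by crossing triangles. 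Your proposed move (delete one edge of $C$, add $vu_1,vu_2$, then delete ``a compensating edge along a shortest path from $v$ to $C$'') is not shown to preserve the cactus property: if $v$ already reaches $C$ at some vertex $v_i$, the added edges put edges of $C$ and of the $v$--$C$ path on two cycles, and a single compensating deletion does not obviously repair this while keeping the edge count. Until a surgery of the paper's precision (or an equivalent case analysis) is supplied, the proof is incomplete.
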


\begin{proof}
Let $H$ be a spanning cactus subgraph of $G$ with the maximum number of edges, such that it contains a cycle $C_k = v_0, \ldots, v_{k-1}$.
If $H$ has a $C_k$ with $k > 4$ containing vertices from both $G_1$ and $G_2$, then $C_k$ has at least one chord $uv$ \suggestion{}{in $G$}, and we can replace $C_k$ with a smaller cycle by adding $uv$ and deleting an edge connecting $u$ and one of its neighbors in $C_k$. If $k \leq 4$ and $C_k$ has a vertex in $V(G_1)$ and a vertex in $V(G_2)$ the result follows.
Let us now assume without loss of generality that the vertices of $C_k$ are all contained in $V(G_1)$. 
We will construct another spanning cactus $H'$ from $H$ having the same number of edges, such that $C_k$ is transformed into a set of cycles, each of them having a vertex in $V(G_1)$ and a vertex in $V(G_2)$. The graph $H'$ starts as a copy of the spanning cactus $H$.
Let $v \in V(G_2)$, and let $v_i$ be the vertex of $C_k$ at a shortest distance from $v$ in $H'$. Notice that every path from a vertex of $C_k$ to $v$ in $H'$ passes through $v_i$. 
Consider now vertices $v_{i - 1}$ and $v_{i + 1}$ (the sums and subtractions are considered modulo $k$), which are the neighbours of $v_i$ in $C_k$. 
Delete edges $v_{i - 1}v_i$ and $v_i v_{i + 1}$ from $H'$. For the resulting path $P_{k - 1}$ obtained by removing these two edges from $C_k$, extract a maximum matching $M$ of $P_{k - 1}$ and delete the remaining edges $E(P_{k - 1}) \setminus M$ from $H'$. 
Finally, for each edge $e = u_1u_2 \in M$, add the edges $v u_1$ and $v u_2$ from $G$ to $H'$. Moreover, if $k-1$ is odd, then there is an unsaturated vertex $w$ in $P_{k-1}$. Add the edge $vw$ from $G$ to $H'$.
These edges exist in $G$, since $u_1, u_2, w \in V(G_1)$ and $v \in V(G_2)$. As the reader may easily verify, the resulting graph $H'$ is a spanning cactus of $G$, in which the cycle $C_k$ has been replaced by a collection of cycles having a vertex in $V(G_1)$ and a vertex in $V(G_2)$. Moreover, the new cactus $H'$ has $(k-1) - |E(P_{k - 1}) \setminus M| - 2$ more edges than $H$, which was optimal. We conclude that $k$ has to be either $3$ or $4$. Furthermore, the repeated application of this procedure allows to transform any spanning cactus $H$ having $C_3$ or $C_4$ entirely contained in $V(G_1)$ or $V(G_2)$ into another spanning cactus $H'$ in which these cycles have a vertex in $V(G_1)$ and a vertex in $V(G_2)$, and such that $|E(H')| = |E(H)|$. 
\end{proof}

\begin{theorem}
Let $G$ be a connected quasi-threshold graph. Then, a minimum deletion to cacti can be computed in polynomial time.
\end{theorem}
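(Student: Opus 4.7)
The plan is to combine Lemma 4.5 with the recursive decomposition of quasi-threshold graphs to derive an explicit closed-form formula for the minimum number of deletions. Since $G$ is connected and quasi-threshold, it has a universal vertex $v$, so $G = \{v\} \oplus (G-v)$. By Lemma 4.5 there is an optimal spanning cactus $H$ of $G$ whose cycles are all $C_3$'s or $C_4$'s, each containing $v$. Let $G - v = G_1 \sqcup \cdots \sqcup G_k$ be the decomposition into connected components; since no edge of $G$ connects vertices of two distinct $G_i$ except via $v$, every cycle of $H$ lies entirely in some $\{v\} \cup V(G_i)$, and the edges of $H$ partition so that $H$ induces a spanning cactus $H_i$ on each $\{v\} \oplus G_i$ independently. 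Using $|E(H)| = |V(G)| - 1 + c(H)$, the problem reduces to maximizing, for each $i$, the number $c_i$ of cycles in an edge-disjoint packing of $C_3$'s and $C_4$'s through $v$ in $\{v\} \cup V(G_i)$, subject to the constraint that each vertex $u \in V(G_i)$ is an endpoint (uses the star edge $vu$) in at most one cycle.

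I claim $c_i = \lfloor |V(G_i)|/2 \rfloor$. The upper bound is immediate: each cycle consumes two distinct endpoint vertices, and each vertex can serve as endpoint at most once. For the matching lower bound, I would show by induction on $n := |V(G_i)|$ that such a packing is always achievable. Writing $G_i = \{w\} \oplus (H_1 \sqcup \cdots \sqcup H_\ell)$ with $w$ a universal vertex of $G_i$, there are two cases. If $\ell \geq 2$, take $u \in H_1$ and $u' \in H_2$ and use the $4$-cycle $(v, u, w, u')$, which exists because $w$ is universal in $G_i$; here $u$ and $u'$ are the two endpoints while the star edge $vw$ remains free, and we recurse on $G_i - \{u, u'\}$, still a connected QT graph of size $n - 2$ (since $w$ remains universal). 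If $\ell = 1$, decompose $H_1 = \{w'\} \oplus H_1'$; provided $|H_1| \geq 2$ (otherwise $G_i = K_2$, the base case), $H_1'$ is non-empty, so pick $a \in H_1'$ and use the triangle $(v, w, a)$ with edge $wa \in E(G_i)$, then recurse on $H_1 - a$, which remains connected QT because $w'$ stays universal after deleting $a \neq w'$. Each step contributes one cycle and reduces $n$ by $2$, yielding $\lfloor n/2 \rfloor$ cycles in total; the base case $G_i = K_2$ is realized by the single triangle $(v, w, u)$.

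Combining both parts, the minimum edge deletion equals $|E(G)| - |V(G)| + 1 - \sum_{i=1}^{k} \lfloor |V(G_i)|/2 \rfloor$, which is computable in linear time: identifying $v$, the components of $G - v$, and evaluating the sum all fit in $O(|V(G)| + |E(G)|)$, and the recursive construction above explicitly produces an optimal cactus in the same bound. The most delicate step is the choice of $a$ in the $\ell = 1$ case: one must pick $a \in H_1'$ rather than $a = w'$, because removing both universal vertices of $G_i$ at once can disconnect the remainder into pieces of bad parity and strictly worsen the total $\sum \lfloor \cdot /2 \rfloor$. This phenomenon is illustrated by $G_i = K_4 - e$: removing $w$ together with the other universal vertex leaves two non-adjacent vertices (no further cycles), whereas removing $w$ with a non-universal vertex leaves $K_2$ (one further cycle), yielding the correct bound of $2$.
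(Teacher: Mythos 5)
There is a genuine error: your key claim that each component contributes $c_i=\lfloor |V(G_i)|/2\rfloor$ cycles is false, and the flaw is in the $\ell\geq 2$ step of your construction. After you place the $C_4$ $(v,u,w,u')$, the hub $w$ cannot appear in any later cycle (and the edge $vw$ cannot even be retained): in a cactus any two cycles share at most one vertex, and since by Lemma~\ref{lemma:join} every cycle passes through $v$, a second cycle through both $v$ and $w$ (or the presence of $vw$ together with the $C_4$) creates a triangle such as $v,u,w$ whose edge $vu$ then lies in two cycles. Your recursion, however, keeps $w$ as the universal vertex of $G_i-\{u,u'\}$ and reuses it. A concrete counterexample is $G_i=K_{1,3}$ with center $w$, i.e.\ $G=\{v\}\oplus K_{1,3}$, a connected quasi-threshold graph on $5$ vertices with $7$ edges: every cycle of $G$ contains $v$, two cycles would have to use vertex-disjoint pairs from $\{w,a,b,c\}$ each spanned inside $G-v$, which is impossible since all edges of $K_{1,3}$ meet $w$; hence the maximum spanning cactus has one cycle and $5$ edges, so the minimum deletion is $2$, while your formula $|E|-|V|+1-\lfloor 4/2\rfloor$ gives $1$. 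Your upper-bound argument (each vertex is an endpoint of at most one cycle) is not the binding constraint; the binding constraint is vertex-disjointness of the cycles outside $v$.

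The correct continuation, and the one the paper takes, is: by Lemma~\ref{lemma:join} all cycles of an optimal spanning cactus may be assumed to contain the universal vertex $v$, and each $C_4$ through $v$ can be replaced by a triangle through $v$ plus one extra edge incident to $v$ without changing the edge count, so one may assume all cycles are triangles $v,a,b$. Since the cycles meet pairwise only in $v$, the edges $ab$ used by these triangles form a matching in $G-v$, and conversely any matching of $G-v$ yields that many edge-disjoint triangles through $v$; hence the maximum number of cycles equals the maximum matching of $G-v$ (which is $1$, not $2$, in the example above), and the minimum deletion is $|E(G)|-\bigl(|V(G)|-1+\mu(G-v)\bigr)$, computable in polynomial time via any maximum matching algorithm. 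Your per-component value would have to be the matching number of $G_i$, not $\lfloor |V(G_i)|/2\rfloor$, and connected quasi-threshold graphs need not have near-perfect matchings.
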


\begin{proof}
It follows from the definition of quasi-threshold graph that $G$ has a universal vertex $v$, since it is connected. In other words, $G$ is the join of a graph $G_1$ containing a single vertex and another quasi-threshold graph $G_2$. From Lemma~\ref{lemma:join}, it follows that there is a subgraph $H$ having the maximum possible number of edges such that $v$ belongs to each cycle of $H$. Without loss of generality, we may assume that each cycle is a triangle, otherwise we can transform a $C_4$ into a triangle plus an extra edge incident to $v$ using the same number of edges of the cactus.\\
Therefore, to find such cycles it is enough to find a maximum matching in $G_2$.
\end{proof}

\section{Conclusions and future work}\label{sec:conclusions}

In this work, we study the $\Pi$-deletion problem in some classes similar 
to trees. We obtain hardness results for cactus deletion restricted to \suggestion{subcubic}{} bipartite graphs, and for $\Pi$-deletion when $\Pi$ is 
is the property of belonging to the class of constellations or caterpillars. We also provide a list of properties that are sufficient to determine that, if the graph class $\Pi$ fulfills these conditions, then $\Pi$-deletion is NP-hard even when the input is a bipartite graph.
Given the hardness of these results, we then focus our study on positive results for \delcactus by restricting the input graph to chordal and quasi-threshold graphs. 
One possible future direction for this work is to deepen the study of \delcactus within other graph classes. We know that \delcactus is polynomial-time solvable for quasi-threshold graphs, thus a natural question arises regarding the complexity of this problem in superclasses of quasi-threshold graphs that are not chordal, such as cographs.
Another interesting graph class to restrict the input of the problem are planar graphs. 


\suggestion{}{We now briefly discuss Fixed Parameter Tractability (FPT) of the problem when considering the natural parameter $k$ (that is, the number of edges $k$ that we are allowed to delete). It is known that cacti can only have a number of edges linear on its number of vertices. Let $f(n)$ be the maximum number of edges in a cactus with $n$ vertices. A graph having more than $f(n) + k$ edges cannot be modified into a cactus with $k$ edge deletions. Thus, FPT of the problem becomes trivial if the graph has ``too many'' edges.
It is easy to see that, if $G$ is a graph and $H = (V(G), E(G) \setminus S)$ is a cactus, then the treewidth of $G$ is at most $2|S| + 2$. 
As a consequence of this, the clique-width of $G$ is also bounded by a function of $k$. 
This result also follows from~\cite{Bodlaender98} and the fact that cacti are precisely those graphs that do not contain the diamond as a minor.
Since minor-closed classes of graphs are MSO expressible~\cite{courcelle_engelfriet_2012}, it follows that there exists a linear-time FPT algorithm for \delcactus parameterized by $k$, the number of allowed deletions.
However, those algorithms would be impractical. Other research direction includes structural parameterizations, in particular those that do not necessarily grow with the density of the graph. Another direction for future work could be to restrict the input of cactus deletion to graph classes having bounded clique-width, in order to use their structural properties to design efficient algorithms for this problem.}


Finally, we present an interesting related question:
\textit{given a spanning tree $T$ of a graph $G$, is the problem of completing $T$ to a maximum spanning cactus by using only edges of $G$ polynomial?}
Notice that the same problem is NP-hard if we ask for a maximum spanning chordal graph: consider a graph $G$, and add a universal vertex $u$. Consider the spanning tree $T$ to be the star centered in the universal vertex $u$. Hence, the problem is equivalent to that of deletion to chordal in the original graph.

\section*{Acknowledgements}

\noindent Nina Pardal partially supported by a CONICET postdoctoral scholarship and UBACyT Grant \\ 20020190200124BA. Vinicius dos Santos was partially supported by CNPq Grants 312069/2021-9 and 406036/2021-7 and FAPEMIG Grant APQ-01707-21. 


 \bibliographystyle{abbrv} 
 \bibliography{cas-refs}

\begin{thebibliography}{10}

\bibitem{AlonSS05}
N.~Alon, A.~Shapira, and B.~Sudakov.
\newblock Additive approximation for edge-deletion problems.
\newblock In {\em 46th Annual {IEEE} Symposium on Foundations of Computer
  Science {(FOCS} 2005), 23-25 October 2005, Pittsburgh, PA, USA, Proceedings},
  pages 419--428. {IEEE} Computer Society, 2005.

\bibitem{bertossi1984dominating}
A.~A. Bertossi.
\newblock Dominating sets for split and bipartite graphs.
\newblock {\em Inf. Proc. Letters}, 19(1):37--40, 1984.

\bibitem{bertossi1986hamiltonian}
A.~A. Bertossi and M.~A. Bonuccelli.
\newblock Hamiltonian circuits in interval graph generalizations.
\newblock {\em Inf. Proc. Letters}, 23(4):195--200, 1986.

\bibitem{Bodlaender98}
H.~L. Bodlaender.
\newblock A partial \emph{k}-arboretum of graphs with bounded treewidth.
\newblock {\em Theor. Comput. Sci.}, 209(1-2):1--45, 1998.

\bibitem{Bondy}
J.~A. Bondy and U.~S. Murty.
\newblock {\em {G}raph {T}heory}.
\newblock Springer Publishing Company, Incorporated, 1st edition, 2008.

\bibitem{ChuMum1994}
F.~R. Chung and D.~Mumford.
\newblock Chordal completions of planar graphs.
\newblock {\em J. Comb. Theory Ser. B}, 62(1):96–106, sep 1994.

\bibitem{corneil1984clustering}
D.~G. Corneil and Y.~Perl.
\newblock Clustering and domination in perfect graphs.
\newblock {\em Disc. Appl. Math.}, 9(1):27--39, 1984.

\bibitem{courcelle_engelfriet_2012}
B.~Courcelle and J.~Engelfriet.
\newblock {\em Graph Structure and Monadic Second-Order Logic: A
  Language-Theoretic Approach}.
\newblock Encyclopedia of Mathematics and its Applications. Cambridge
  University Press, 2012.

\bibitem{Colburn-ElMallah88}
E.~El-Mallah and C.~Colbourn.
\newblock The complexity of some edge deletion problems.
\newblock {\em IEEE Transactions on Circuits and Systems}, 35(3):354--362,
  1988.

\bibitem{GolGolKapSha1995}
P.~W. Goldberg, M.~C. Golumbic, H.~Kaplan, and R.~Shamir.
\newblock Four strikes against physical mapping of dna.
\newblock {\em J. Comp. Biology}, 2:139--152, 1995.

\bibitem{GolKapSha1994}
M.~C. Golumbic, H.~Kaplan, and R.~Shamir.
\newblock On the complexity of dna physical mapping.
\newblock {\em Adv. Appl. Math.}, 15(3):251–261, sep 1994.

\bibitem{itai1982hamilton}
A.~Itai, C.~H. Papadimitriou, and J.~L. Szwarcfiter.
\newblock Hamilton paths in grid graphs.
\newblock {\em SIAM J. on Computing}, 11(4):676--686, 1982.

\bibitem{LewisYannakakis80}
J.~M. Lewis and M.~Yannakakis.
\newblock The node-deletion problem for hereditary properties is np-complete.
\newblock {\em J. Comput. Syst. Sci.}, 20(2):219--230, 1980.

\bibitem{Lovasz80}
L.~Lov{\'{a}}sz.
\newblock Matroid matching and some applications.
\newblock {\em J. Comb. Theory, Ser. {B}}, 28(2):208--236, 1980.

\bibitem{monnot2007path}
J.~Monnot and S.~Toulouse.
\newblock The path partition problem and related problems in bipartite graphs.
\newblock {\em Op. Res. Letters}, 35(5):677--684, 2007.

\bibitem{saitoh2020fixedtreewidthefficient}
T.~Saitoh, R.~Yoshinaka, and H.~L. Bodlaender.
\newblock Fixed-treewidth-efficient algorithms for edge-deletion to
  intersection graph classes.
\newblock In {\em {WALCOM:} Algorithms and Computation - 15th International
  Conference and Workshops, {WALCOM} 2021, Yangon, Myanmar, February 28 - March
  2, 2021, Proceedings}, pages 142--153, 2021.

\bibitem{TarYan1984}
R.~E. Tarjan and M.~Yannakakis.
\newblock Simple linear-time algorithms to test chordality of graphs, test
  acyclicity of hypergraphs, and selectively reduce acyclic hypergraphs.
\newblock {\em SIAM J. on Computing}, 13(3):566--579, 1984.

\bibitem{vatshelle2012new}
M.~Vatshelle.
\newblock {\em New width parameters of graphs}.
\newblock PhD thesis, The University of Bergen, 2012.

\bibitem{Yannakakis79}
M.~Yannakakis.
\newblock The effect of a connectivity requirement on the complexity of maximum
  subgraph problems.
\newblock {\em J. {ACM}}, 26(4):618--630, 1979.

\bibitem{yannakakis_edge_deletion}
M.~Yannakakis.
\newblock Edge-deletion problems.
\newblock {\em SIAM Journal on Computing}, 10(2):297--309, 1981.

\bibitem{zvervich1995induced}
I.~E. Zvervich and V.~E. Zverovich.
\newblock An induced subgraph characterization of domination perfect graphs.
\newblock {\em J. Graph Theory}, 20(3):375--395, 1995.

\end{thebibliography}





\end{document}